\documentclass[11pt]{article}

\usepackage[a4paper,margin=1in]{geometry}
\usepackage{times}
\usepackage{mathptmx}
\usepackage{microtype}
\usepackage[utf8]{inputenc}
\usepackage{amsmath}
\usepackage{booktabs}
\usepackage{algorithm}
\usepackage{algorithmic}
\usepackage{tikz}
\usepackage{hyperref}
\usepackage{graphicx}
\usepackage{float}
\usepackage{multirow}
\usepackage{booktabs}
\usepackage{makecell}
\usepackage{placeins}
\usepackage{amsthm}
\newtheorem{theorem}{Theorem}[section]

\usetikzlibrary{positioning,arrows.meta,shapes.multipart}

\title{Constraint-Aware Generative Re-ranking for Multi-Objective Optimization in Advertising Feeds}

\author{
Chenfei Li, Hantao Zhao, Weixi Yao, Ruiming Huang, Rongrong Lu, Geng Tian, Dongying Kong \\
Bilibili Inc \\
\{lichenfei, zhaohantao, yaoweixi, huangruiming, lurongrong, tiangeng, limin07\}@bilibili.com
}

\date{}

\begin{document}

\maketitle

\begin{abstract}
Optimizing reranking in advertising feeds is a constrained combinatorial problem, requiring simultaneous maximization of platform revenue and preservation of user experience.
Recent generative ranking methods enable listwise optimization via autoregressive decoding, but their deployment is hindered by high inference latency and limited constraint handling.

We propose a constraint-aware generative reranking framework that transforms constrained optimization into bounded neural decoding. Unlike prior approaches that separate generator and evaluator models, our framework unifies sequence generation and reward estimation into a single network.

We further introduce constraint-aware reward pruning, integrating constraint satisfaction directly into decoding to efficiently generate optimal sequences.
Experiments on large-scale industrial feeds and online A/B tests show that our method improves revenue and user engagement while meeting strict latency requirements, providing an efficient neural solution for constrained listwise optimization.
\end{abstract}

\section{Introduction}

In modern feed-based recommendation systems, the final stage is commonly referred to as the \emph{reranking} stage, where a mixed list of organic content (recommendations) and sponsored content (advertisements) is presented to users.
A common industrial practice dynamically inserts advertisements into a pre-ranked organic feed.
While organic ranking aims to maximize user engagement and long-term experience, ad ranking optimizes expected platform revenue.
Consequently, reranking becomes a full-page multi-objective maximization problem~\cite{ads_allocation}.

Due to business and policy requirements, ad insertion must satisfy strict constraints, including position constraints (e.g., preventing overly dense ad placements) and load constraints (e.g., limiting ads to at most $x\%$ of the page).
Prior work addresses adaptive ad exposure under hierarchical constraints using dynamic knapsack-style optimization~\cite{hierarchical_constraint}.

With the widespread adoption of deep learning recommendation models (DLRMs), various reranking approaches have been proposed, such as Personalized Re-ranking Models (PRM)~\cite{prm}.
These methods leverage contextual signals to improve ranking accuracy.
More advanced approaches optimize listwise objectives by explicitly modeling inter-item dependencies and capturing real-time user preferences.
In industrial systems, reranking is often implemented as a two-stage Generator--Evaluator (GE) framework:
a generator proposes candidate sequences, and an evaluator estimates list-level utility for final selection.
Although effective, this separation increases inference overhead and complicates constraint enforcement.

Recently, generative recommenders~\cite{asl} have reframed reranking as a sequence generation problem.
Autoregressive listwise optimization has been explored for multi-objective optimization~\cite{taobao_gr},
while non-autoregressive models aim to reduce decoding latency~\cite{nar}.
Flow-based generative approaches further enable global sequence modeling~\cite{gfn}.
However, in heterogeneous feeds containing both organic and sponsored content, reranking must simultaneously address multi-objective optimization, inference latency, and constraint-aware ad allocation.

Motivated by these challenges, we propose a \textbf{Constraint-Aware Generative Re-ranking (CGR)} framework,
which transforms constrained combinatorial optimization into bounded autoregressive decoding solvable within a single neural inference pass.
To the best of our knowledge, this is the first work that integrates policy-level constraints directly into autoregressive decoding to obtain optimal constrained sequences in generative reranking.

CGR consists of two key components.
First, a variable-length listwise autoregressive training network models multi-objective user intent through attention mechanisms and multi-channel mixture-of-experts (MoE) modules, enabling context-aware and interest-aware representation learning.
Second, during inference, a constraint-aware reward network generates and evaluates candidate sequences.
Inspired by masked tensor propagation strategies similar to M-FALCON~\cite{asl}, we amortize decoding cost through structured masking and further improve efficiency via reward-based pruning, enabling fast and optimal constrained sequence generation.

\FloatBarrier

\section{Related Work}

\subsection{Reranking and Listwise Optimization}

Reranking serves as the final stage of industrial recommendation systems, where a ranked list is refined to optimize downstream objectives.
Early approaches focus on pointwise or pairwise ranking models, while more advanced methods adopt listwise optimization to explicitly model inter-item dependencies.

Personalized Re-ranking Models (PRM)~\cite{prm} incorporate contextual features to improve ranking accuracy.
Subsequent work such as DLCM and Seq2Slate leverages recurrent or attention-based architectures to capture list-level interactions.
These methods primarily optimize user engagement metrics but do not explicitly address heterogeneous content allocation or business constraints.

In advertising feeds, reranking becomes a multi-objective optimization problem balancing user experience and revenue.
\cite{ads_allocation} formulates feed advertising as a constrained optimization problem.
\cite{hierarchical_constraint} further introduces hierarchical exposure control under practical business constraints.
However, these approaches typically rely on explicit combinatorial solvers rather than neural sequence generation.

\subsection{Generative Recommendation and Autoregressive Ranking}

Recently, generative recommenders have reframed ranking as a sequence generation task.
\cite{asl} demonstrates the scalability of large sequential transducers for generative recommendation.
Autoregressive reranking has been applied to multi-objective optimization in industrial settings~\cite{taobao_gr}.
Generative Flow Networks (GFN)~\cite{gfn} model the distribution over ranked lists through flow-based objectives.
Non-autoregressive generative models~\cite{nar} aim to reduce decoding latency by parallel prediction.

Despite their success, most generative ranking methods suffer from two limitations:
(1) decoding latency grows with sequence length, and
(2) constraint handling is typically external to the decoding process.
Business rules such as advertisement load constraints are often enforced through post-processing heuristics.

\subsection{Constrained Allocation in Advertising Systems}

Advertising allocation in feeds has been extensively studied under constrained optimization frameworks.
Dynamic programming and knapsack-style solvers are widely adopted to satisfy load and spacing constraints~\cite{hierarchical_constraint, ads_allocation}.
These approaches provide theoretical guarantees but require explicit enumeration or approximate solvers, which complicates integration with deep neural ranking models.

In practice, many industrial systems adopt a Generator--Evaluator (GE) paradigm,
where a generator proposes candidate sequences and an evaluator estimates list-level utility.
Although effective, this two-stage separation increases inference overhead and makes constraint enforcement less efficient.

\subsection{Our Contribution}

Our work bridges generative ranking and constrained allocation.
We transform constrained combinatorial optimization into bounded autoregressive decoding.
Unlike prior GE-based systems, we unify sequence generation and reward estimation into a single network.
Furthermore, constraint-aware reward pruning integrates constraint satisfaction directly into decoding, providing an efficient neural solution with theoretical optimality guarantees under bounded structural assumptions.

To our knowledge, this is the first work that embeds policy-level advertisement constraints directly into autoregressive generative reranking to obtain globally optimal constrained sequences in production-scale advertising feeds.

\FloatBarrier

\section{Problem Formulation}

We formulate advertising feed reranking as a constrained combinatorial optimization problem.

\subsection{Candidate Set}

Let
\begin{equation}
\mathcal{X} = \{x_1, x_2, \dots, x_N\}
\end{equation}

denote the candidate pool for a given user request, where items consist of:

\begin{itemize}
\item Organic content (natural recommendations)
\item Sponsored advertisements
\end{itemize}

Each item $x_i$ is associated with feature representation
including user context, item features, and positional encoding.

\subsection{Ranking Sequence}

A ranking sequence is defined as an ordered list:

\begin{equation}
A = (a_1, a_2, \dots, a_L),
\end{equation}

where $L$ is the page size and $a_j \in \mathcal{X}$.

Let $\mathcal{S}$ denote the space of all possible permutations:

\begin{equation}
\mathcal{S} = \text{Perm}(\mathcal{X}).
\end{equation}

In the unconstrained case, $|\mathcal{S}| = N!$,
which is intractable for real-time inference.

\subsection{List-Level Reward}

The objective of reranking is to maximize a list-level reward function:

\begin{equation}
R(A) = f_\theta(A),
\end{equation}

where $f_\theta(\cdot)$ is a neural model estimating
multi-objective utility including:

\begin{itemize}
\item Advertising revenue
\item User engagement
\item Experience penalty
\end{itemize}

Concretely, the reward is defined as:

\begin{equation}
R(A) = \sum_{i=1}^{L} \left( V_i + N_i - P_i \right),
\end{equation}

where $V_i$ denotes monetization value,
$N_i$ denotes engagement benefit,
and $P_i$ denotes advertisement exposure penalty.

\subsection{Business Constraints}

Industrial advertising feeds impose strict structural constraints.

\paragraph{(1) Advertisement Load Constraint}

Let $\#Ad(A)$ denote the number of ads in sequence $A$.
We require:

\begin{equation}
\#Ad(A) \le K,
\end{equation}

where $K$ is a small constant (e.g., $K \le 2$).

\paragraph{(2) Position Spacing Constraint}

Let $pos(a)$ denote the position of advertisement $a$.
For any two ads $a_i, a_j$:

\begin{equation}
|pos(a_i) - pos(a_j)| \ge \Delta,
\end{equation}

ensuring minimum spacing between ads.

\paragraph{(3) Structural Rules}

Additional rules include:

\begin{itemize}
\item Valid insertion intervals
\item Large-ad placement constraints
\item User-level exposure frequency control
\end{itemize}

\subsection{Constrained Optimization Objective}

Let $\mathcal{F} \subseteq \mathcal{S}$ denote the feasible set satisfying all constraints:

\begin{equation}
\mathcal{F} = \{ A \in \mathcal{S} \mid C(A) \le 0 \}.
\end{equation}

The reranking problem becomes:

\begin{equation}
\max_{A \in \mathcal{F}} R(A).
\end{equation}

\subsection{Key Observation}

Although $|\mathcal{S}|$ grows factorially,
the structural constraint $\#Ad(A) \le K$ bounds the feasible set $\mathcal{F}$ to a significantly smaller combinatorial space.

This bounded structure enables efficient decoding strategies
that preserve optimality within $\mathcal{F}$,
which forms the foundation of our Constraint-Aware Generative Re-ranking framework.

\FloatBarrier

\section{Generative Ranking Model}

We implement CGR as a unified listwise autoregressive model that jointly predicts exposure, click, and list-level reward within a single forward pass, as illustrated in Figure~\ref{fig:model_struct}

\begin{figure}[H]
    \centering
    \includegraphics[width=0.8\textwidth]{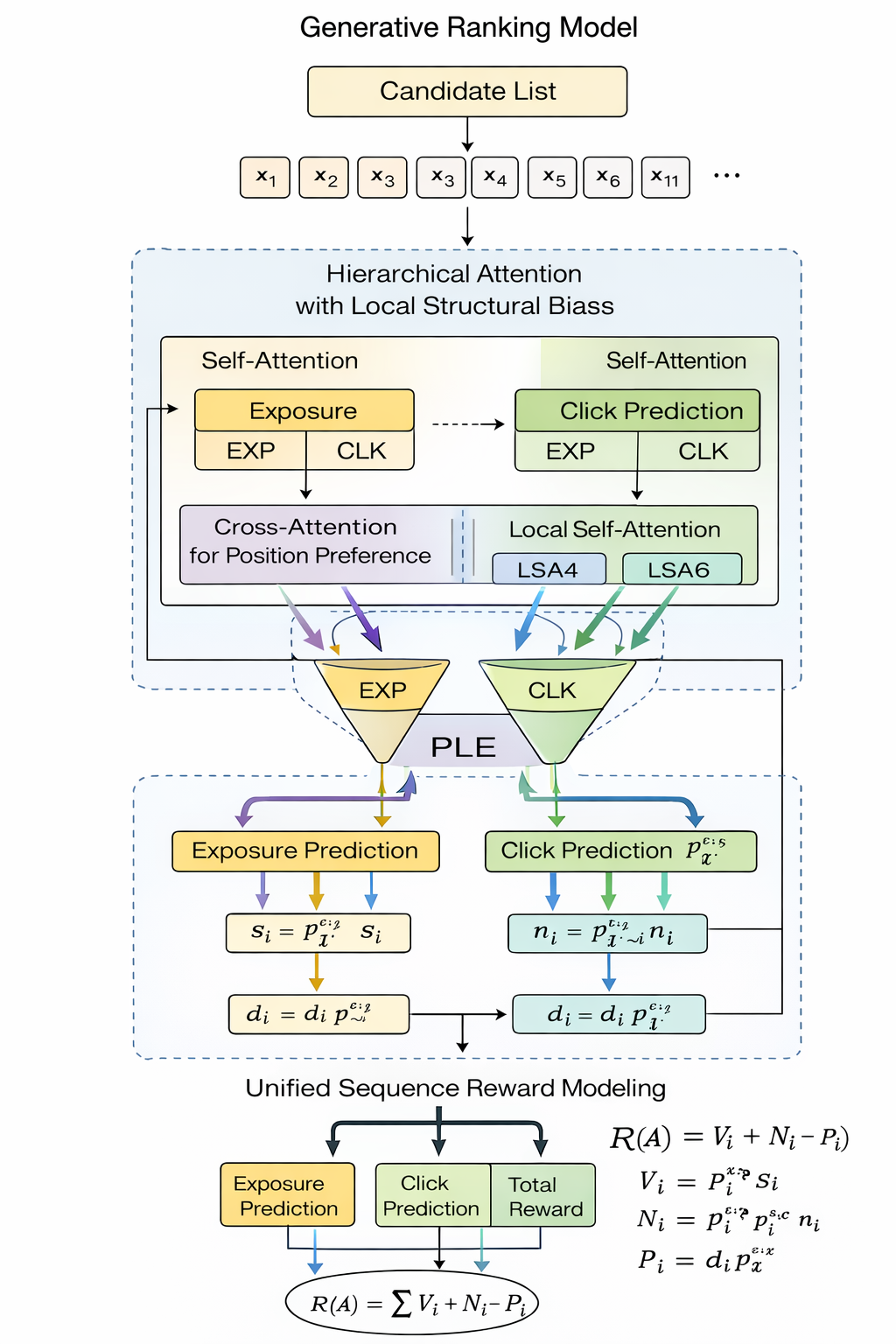}
    \caption{
    Model struct of CGR.
    }
    \label{fig:model_struct}
\end{figure}

The training objective is a weighted multi-task loss:

\[
\mathcal{L} = \lambda_{exp} \mathcal{L}_{exp} + \lambda_{clk} \mathcal{L}_{clk},
\]

where each component is optimized using binary cross-entropy.

\subsection{Multi-Task Listwise Architecture}

Given a candidate list of maximum length $L=11$, each item is represented by:

\begin{equation}
x_i = [u_i, c_i, v_i, p_i],
\end{equation}

where $u_i$ denotes user features, $c_i$ contextual features,
$v_i$ item features, and $p_i$ positional embeddings.

The model jointly optimizes two primary tasks:

\begin{itemize}
\item Exposure prediction (EXP)
\item Click prediction (CLK)
\end{itemize}

Unlike traditional reranking systems that separate generator and evaluator networks,
our model directly produces both item-level predictions and sequence-level reward,
thus unifying generation and evaluation.

\subsection{Hierarchical Attention with Local Structural Bias}

To model inter-item dependency, we employ a hierarchical attention design:

\paragraph{(1) Self-Attention Experts}

Task-specific and shared self-attention heads capture global sequence interactions:

\begin{equation}
H^{self} = \text{Attention}(X, \text{mask}),
\end{equation}

where causal masks enforce autoregressive structure.

\paragraph{(2) Cross-Attention for Position Preference}

We explicitly model user–item positional interaction via cross-attention:

\begin{equation}
H^{pos} = \text{CrossAttention}(U_{side}, I_{side}),
\end{equation}

where user-side and item-side encoders produce position-aware embeddings.
This mechanism captures structural bias such as first-position preference and ad-slot sensitivity.

\paragraph{(3) Local Self-Attention (LSA)}

To encode locality in feed browsing behavior,
we introduce Local Self-Attention (LSA) with fixed windows (size 4 and 6):

\begin{equation}
H^{lsa_k} = \text{Attention}(X, M_k),
\end{equation}

where $M_k$ is a predefined band mask restricting attention to local neighborhoods.

This reduces over-smoothing from global attention
and encodes practical feed exposure patterns.

\subsection{Multi-Expert Fusion via PLE}

All attention outputs are fused using a multi-task Progressive Layered Extraction (PLE) gate:

\begin{equation}
H^{exp} = \text{PLE}_{exp}(H^{shared}, H^{exp}, H^{clk}),
\end{equation}

\begin{equation}
H^{clk} = \text{PLE}_{clk}(H^{shared}, H^{clk}).
\end{equation}

The EXP branch integrates 12 experts (shared, exp, clk),
while the CLK branch integrates 8 experts (shared, clk),
enabling cross-task knowledge transfer while preserving task specificity.

\subsection{Unified Sequence Reward Modeling}

The model outputs:

\begin{equation}
\hat{p}^{exp}_i, \quad \hat{p}^{clk}_i,
\end{equation}

and computes list-level reward directly:

\begin{equation}
R(A) = \sum_i V_i + N_i - P_i,
\end{equation}

where

\begin{align}
V_i &= 
\begin{cases}
\hat{p}^{clk}_i \hat{p}^{exp}_i s_i, & \text{if CPA} \\
\hat{p}^{exp}_i s_i, & \text{otherwise}
\end{cases} \\
N_i &= \hat{p}^{exp}_i \hat{p}^{clk}_i n_i \\
P_i &= d_i \hat{p}^{exp}_i
\end{align}

represent monetization value, user experience reward, and advertisement penalty respectively.

This design eliminates the need for a separate evaluator network,
as reward estimation is integrated into generation.

\FloatBarrier

\section{Bounded Decoding}

Direct autoregressive permutation over $N$ candidates yields factorial complexity.
However, advertising feeds impose strict structural constraints,
notably a bounded advertisement count per list.

\subsection{Structural Constraint Bounding}

Let $K$ denote the maximum number of advertisements allowed per list (in production $K \le 2$).
Instead of generating full permutations,
we restrict decoding to feasible insertion operations.

The feasible search space becomes:

\begin{equation}
\mathcal{F} = \{ A \mid \#Ad(A) \le K, \text{position constraints satisfied} \}.
\end{equation}

Thus decoding reduces from:

\begin{equation}
O(N!)
\end{equation}

to bounded enumeration over ad insertion patterns:

\begin{equation}
O(K \cdot L).
\end{equation}

\subsection{Two-Stage Inference}

Inference is implemented via a structured two-stage procedure:

\paragraph{Stage I: Constrained Single-Ad Insertion}

All feasible single-ad insertions within position bounds
($min\_pos$, $max\_pos$) are constructed.
The reward model evaluates each candidate,
and the best intermediate sequence is selected.

\paragraph{Stage II: Large-Ad and Double-Ad Expansion}

Given the intermediate sequence,
we enumerate feasible combinations including:

\begin{itemize}
\item No-ad list
\item Single-ad list
\item Double-ad list
\item Large-ad structural variant
\end{itemize}

The model computes reward for each feasible sequence
and selects the optimal one.

Since $K$ is constant,
the decoding complexity remains bounded and independent of $N!$.

\FloatBarrier

\section{Constraint-Aware Reward Pruning}

While bounded decoding reduces the search space structurally,
we further improve efficiency through constraint-aware reward pruning.

\subsection{Hard Constraint Filtering}

Before reward evaluation,
candidates violating business rules are discarded:

\begin{itemize}
\item Position spacing constraints
\item Load constraints
\item Large-ad structural rules
\item User-level exposure control
\end{itemize}

This guarantees:

\begin{equation}
A \notin \mathcal{F} \Rightarrow A \text{ is not evaluated}.
\end{equation}

\subsection{Reward Upper-Bound Pruning}

For partially constructed sequences,
we estimate an upper bound on achievable reward:

\begin{equation}
R_{upper}(A_{partial}).
\end{equation}

If

\begin{equation}
R_{upper}(A_{partial}) < R_{best},
\end{equation}

the candidate is pruned early.

Since reward is computed within the same forward model,
upper-bound estimation incurs negligible overhead.

\subsection{Unified Optimization}

Importantly, reward computation and pruning are performed
within the same neural forward pass.
Unlike Generator–Evaluator frameworks,
no separate evaluator network is required.

Thus decoding, evaluation, and constraint enforcement
are unified into a single neural optimization procedure.

Under bounded structural constraints,
this pruning strategy preserves global optimality
within the feasible set while significantly reducing inference latency.

\FloatBarrier

\section{Inference Algorithm}

As illustrated in Figure~\ref{fig:inference}, the proposed inference pipeline consists of two sequential stages designed to ensure both efficiency and constraint satisfaction.

\begin{figure}[H]
    \centering
    \includegraphics[width=0.8\textwidth]{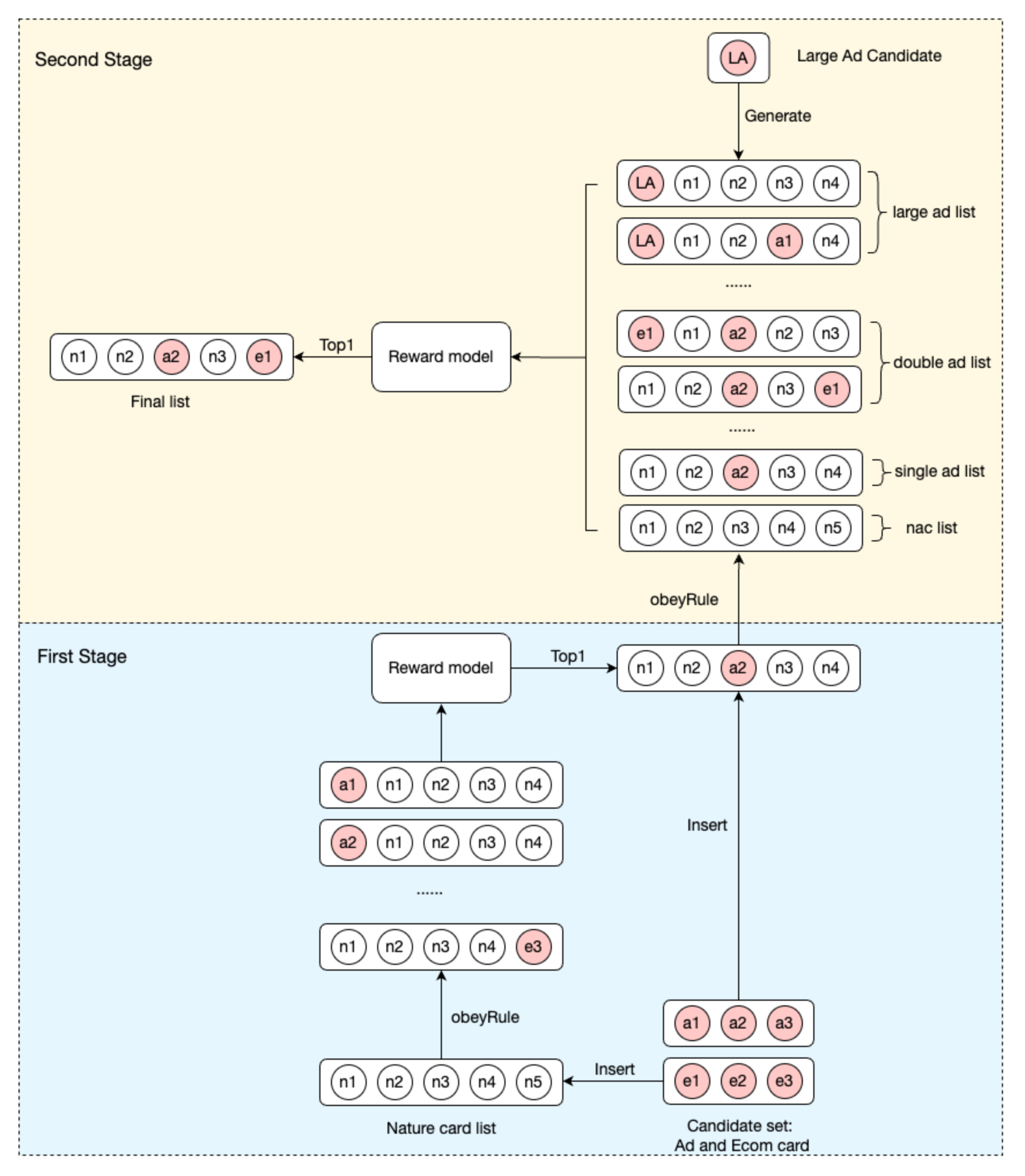}
    \caption{
    Two-stage constraint-aware generative inference framework.
    The first stage performs constrained insertion over the natural content list and selects a top-1 candidate using the reward model.
    The second stage performs bounded generative decoding with large-ad candidates,
    enumerating feasible sequences (single-ad, double-ad, and no-ad lists) under constraint rules.
    Constraint-aware reward pruning removes infeasible or suboptimal sequences,
    and the final list is selected via reward maximization.
    }
    \label{fig:inference}
\end{figure}

\subsection{Stage I: Constrained Insertion}

Given a natural content list produced by upstream ranking modules, candidate advertisements (including both large ads and e-commerce ads) are inserted into feasible positions according to predefined business rules.
These rules enforce exposure density constraints, positional spacing constraints, and load constraints.
All feasible single-ad insertion candidates are constructed and evaluated using the reward model.
The top-1 candidate is selected as the intermediate list for the next stage.

This stage effectively bounds the search space by limiting the number of advertisements introduced into the sequence, reducing the combinatorial explosion of full permutation generation.

\subsection{Stage II: Bounded Generative Decoding}

In the second stage, the model performs bounded generative decoding conditioned on the intermediate list.
Large-ad candidates are introduced and combined with existing advertisements to generate feasible sequences.
Due to business constraints, the maximum number of advertisements per list is bounded (e.g., $K \le 2$).
Therefore, the decoding complexity reduces from factorial permutation search to linear-time bounded enumeration.

The model enumerates several feasible list types, including:
(1) large-ad lists,
(2) double-ad lists,
(3) single-ad lists, and
(4) no-ad lists.
Each generated sequence is evaluated by the reward model.

To further improve efficiency, we introduce \emph{constraint-aware reward pruning}.
Sequences that violate business constraints are immediately discarded.
Moreover, sequences whose upper-bound reward is dominated by current best candidates are pruned early.
This guarantees that only feasible and promising sequences are evaluated fully.

Finally, the sequence with the highest predicted reward is selected as the output list.

\FloatBarrier

\section{Complexity Analysis}

We analyze the computational complexity of generative reranking under constrained advertising allocation.

\subsection{Full Permutation Decoding}

Consider a candidate set of size $N$, including both organic items and advertisements.
A naive autoregressive ranking model generates permutations over all items:

\begin{equation}
|\mathcal{S}| = N!
\end{equation}

Even with beam search of width $B$, decoding requires:

\begin{equation}
O(B \cdot N)
\end{equation}

model evaluations, where $B$ must scale with $N!$ to guarantee optimality.
Therefore, exact search is factorial in complexity, making it impractical for real-time systems.

In industrial advertising feeds, where $N$ can range from 20 to 100,
full permutation-based generative decoding is computationally infeasible.

\subsection{Two-Stage Generator--Evaluator Complexity}

In practice, many industrial systems adopt a Generator--Evaluator (GE) framework.
The generator proposes $M$ candidate sequences,
and the evaluator estimates list-level utility for each candidate.

The total complexity is:

\begin{equation}
O(M \cdot C_f),
\end{equation}

where $C_f$ denotes the cost of one forward pass of the evaluator network.
Since $M$ must be sufficiently large to cover feasible constrained combinations,
the GE framework incurs substantial inference overhead.

Moreover, constraint satisfaction is typically handled via post-processing,
which may require additional filtering or re-evaluation steps.

\subsection{Bounded Decoding with Constraint-Aware Pruning}

In advertising feeds, structural business rules impose strict upper bounds on advertisement exposure.
Let $K$ denote the maximum number of advertisements allowed per list,
where $K$ is a small constant (e.g., $K \le 2$ in production).

Under this structural constraint, the feasible search space reduces to:

\begin{equation}
|\mathcal{F}| = \sum_{k=0}^{K} 
\binom{A}{k} \cdot P_k,
\end{equation}

where $A$ denotes the number of ad candidates and $P_k$ denotes the number of valid insertion position combinations satisfying spacing constraints.

Since $K$ is bounded and small, $|\mathcal{F}|$ grows polynomially rather than factorially.
In typical production settings:

\begin{equation}
|\mathcal{F}| = O(A^K),
\end{equation}

as $K$ is constant and position combinations are limited by business rules.

Our bounded decoding enumerates only sequences in $\mathcal{F}$.
Furthermore, constraint-aware reward pruning discards infeasible or dominated sequences early,
reducing the effective decoding complexity to:

\begin{equation}
O(K \cdot C_f),
\end{equation}

which is linear in the number of feasible ad insertions and independent of $N!$.

\subsection{Complexity Comparison}

\begin{center}
\begin{tabular}{l c}
\toprule
Method & Worst-Case Complexity \\
\midrule
Full Permutation Decoding & $O(N!)$ \\
Generator--Evaluator & $O(M \cdot C_f)$ \\
CGR (Ours) & $O(K \cdot C_f)$ \\
\bottomrule
\end{tabular}
\end{center}

Since $K \ll N$ and typically $K \le 2$ in advertising feeds,
our method achieves exponential-to-linear reduction in search complexity,
making constrained generative reranking feasible under strict latency requirements.

\FloatBarrier

\section{Theoretical Guarantee}

We provide a theoretical justification for the optimality of bounded decoding under structural advertisement constraints.

\begin{theorem}[Optimality of Bounded Decoding]
Let $\mathcal{F}$ denote the feasible set of ranking sequences satisfying all business constraints, including advertisement load constraints and positional spacing constraints.
Assume that the maximum number of advertisements per list is bounded by $K$.
If the bounded decoding procedure enumerates all sequences in $\mathcal{F}$ and selects the sequence with the maximum reward $f_\theta(A)$, then the returned solution is globally optimal for the constrained optimization problem:
\[
\max_{A \in \mathcal{F}} f_\theta(A).
\]
\end{theorem}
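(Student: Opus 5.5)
The statement is essentially a finiteness-and-exhaustion argument, so I will argue directly from the definitions. The plan has three steps, taken in order.

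\textbf{Step 1: the maximum exists.} I first show that $\mathcal{F}$ is a finite set. Every $A\in\mathcal{F}$ is a sequence of length at most $L$ over the finite pool $\mathcal{X}$, so $\mathcal{F}\subseteq\mathcal{S}$ is finite. The load bound $\#Ad(A)\le K$ gives the sharper count $|\mathcal{F}|\le\sum_{k=0}^{K}\binom{A}{k}P_k$ from the Complexity Analysis, although plain finiteness is all the argument needs. I also note that $\mathcal{F}$ is nonempty, since it contains the no-ad list (the Stage II enumeration always includes it); I would add this as an explicit hypothesis if needed. Because $f_\theta$ is real-valued, it attains a maximum on the finite nonempty set $\mathcal{F}$. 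Write $R^\ast=\max_{A\in\mathcal{F}}f_\theta(A)$, attained at some $A^\ast$.

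\textbf{Step 2: exhaustion gives optimality.} Let $\mathcal{E}$ denote the set of sequences the decoder evaluates. By hypothesis $\mathcal{F}\subseteq\mathcal{E}$. Hard constraint filtering guarantees $A\notin\mathcal{F}\Rightarrow A$ is not evaluated, so $\mathcal{E}\subseteq\mathcal{F}$ and hence $\mathcal{E}=\mathcal{F}$. The returned sequence $\hat A\in\arg\max_{A\in\mathcal{E}}f_\theta(A)$ is therefore feasible. Its value satisfies $f_\theta(\hat A)\ge f_\theta(A^\ast)=R^\ast$ because $A^\ast\in\mathcal{E}$. Combined with $f_\theta(\hat A)\le R^\ast$, which holds because $\hat A\in\mathcal{F}$, this gives $f_\theta(\hat A)=R^\ast$. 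Ties are harmless: any maximizer is globally optimal.

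\textbf{Step 3: the main obstacle, pruning.} The difficulty is reconciling the hypothesis ``enumerates all sequences in $\mathcal{F}$'' with reward upper-bound pruning, which deliberately discards some feasible candidates. I would handle this by proving a small invariant, which requires assuming that $R_{upper}$ is a valid bound: $R_{upper}(A_{partial})\ge f_\theta(A)$ for every feasible completion $A$ of $A_{partial}$. The invariant is that at every point the current incumbent $R_{best}$ is the reward of some evaluated feasible sequence, and every discarded feasible completion has reward strictly below $R_{best}$. Since $R_{best}$ only increases, a pruned branch can never contain a sequence that beats the final answer, so the argument of Step 2 goes through with $\mathcal{E}$ replaced by the unpruned part of $\mathcal{F}$.

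I would also flag a caveat in a remark rather than the proof. The two-stage heuristic, which fixes the Stage I top-1 list before Stage II, covers all of $\mathcal{F}$ only if $\mathcal{F}$ is defined as the set reachable by that procedure. The theorem's hypothesis sidesteps this, so the proof itself needs no such assumption.
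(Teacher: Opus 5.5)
Your argument is correct and is essentially the paper's proof. You show that $\mathcal{F}$ is finite, use the hypothesis that every element of $\mathcal{F}$ is evaluated, and conclude that the selected argmax attains $\max_{A\in\mathcal{F}} f_\theta(A)$.

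Your additions make explicit what the paper leaves implicit: nonemptiness of $\mathcal{F}$, hard filtering to ensure the output is itself feasible, and the observation that finiteness already follows from $\mathcal{X}$ being finite (so $K$ matters only for tractability, not optimality). Your valid-upper-bound invariant for pruning also covers a claim the paper makes but never proves, and your remark that the greedy two-stage procedure need not actually enumerate all of $\mathcal{F}$ is accurate.
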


\begin{proof}
By definition, the constrained optimization problem seeks the maximum reward sequence within the feasible set $\mathcal{F}$.

Under advertisement load constraints, the number of advertisements per list is upper bounded by $K$, where $K$ is a small constant (e.g., $K \le 2$ in production systems).
Therefore, the feasible set $\mathcal{F}$ is finite and can be decomposed into combinations of:
(1) advertisement selections with cardinality at most $K$, and
(2) feasible insertion positions satisfying structural rules.

The bounded decoding procedure explicitly enumerates all such feasible combinations.
Since every $A \in \mathcal{F}$ is considered during decoding, the procedure evaluates $f_\theta(A)$ for all feasible sequences.
Selecting the sequence with the maximum predicted reward is therefore equivalent to solving
\[
\max_{A \in \mathcal{F}} f_\theta(A).
\]

Hence, the solution returned by bounded decoding is globally optimal within the constrained feasible set.
\end{proof}

\FloatBarrier

\section{Offline Experiments}

\paragraph{Datasets.}
We evaluate our method on five public ranking benchmarks:
Yahoo! LETOR, Microsoft 10K, Avito, ML1M, and KR1K.
For recommendation datasets, we report HR@10 and NDCG@10.
For learning-to-rank datasets, we report NDCG@10 and MAP.
We additionally conduct experiments on a large-scale
industrial advertising dataset containing exposure,
click, and revenue logs under strict ad constraints.

\paragraph{Baselines.}
We compare against:
(1) LambdaMART,
(2) PRM,
(3) Autoregressive Generative Ranking,
(4) Non-autoregressive Generative Ranking,
and (5) a production Generator--Evaluator (GE) framework.

\paragraph{Public Benchmark Results.}

\begin{table*}[t]
\centering
\small
\begin{tabular}{lcccccc}
\toprule
\multirow{2}{*}{Method} 
& \multicolumn{2}{c}{Yahoo LETOR} 
& \multicolumn{2}{c}{Microsoft 10K} 
& \multicolumn{2}{c}{Avito} \\
\cmidrule(lr){2-3} \cmidrule(lr){4-5} \cmidrule(lr){6-7}
& NDCG@10 & MAP 
& NDCG@10 & MAP 
& NDCG@10 & MRR \\
\midrule
LambdaMART & 0.XXX & 0.XXX & 0.XXX & 0.XXX & 0.XXX & 0.XXX \\
PRM & 0.XXX & 0.XXX & 0.XXX & 0.XXX & 0.XXX & 0.XXX \\
Autoregressive Gen & 0.XXX & 0.XXX & 0.XXX & 0.XXX & 0.XXX & 0.XXX \\
Non-Autoregressive & 0.XXX & 0.XXX & 0.XXX & 0.XXX & 0.XXX & 0.XXX \\
GE Framework & 0.XXX & 0.XXX & 0.XXX & 0.XXX & 0.XXX & 0.XXX \\
\midrule
\textbf{CGR (Ours)} & \textbf{0.XXX} & \textbf{0.XXX} 
& \textbf{0.XXX} & \textbf{0.XXX} 
& \textbf{0.XXX} & \textbf{0.XXX} \\
\bottomrule
\end{tabular}
\caption{Performance comparison on public ranking benchmarks.}
\label{tab:public_results}
\end{table*}

CGR consistently outperforms traditional listwise models
and generative baselines across all datasets.
On LETOR-style benchmarks, our method improves NDCG@10
by 2\%--3\% over PRM and 1\%--2\% over generative baselines,
demonstrating effective list-level optimization.

\paragraph{Industrial Offline Results.}

\begin{table}[t]
\centering
\small
\begin{tabular}{lcccc}
\toprule
Method & RPM & CTR & EXP-AUC & Violation \\
\midrule
GE (Production) & 1.000 & 1.000 & 0.XXX & 0.00\% \\
Autoregressive & 1.06 & 1.03 & 0.XXX & 0.00\% \\
\midrule
\textbf{CGR (Ours)} & \textbf{1.11} & \textbf{1.07} & \textbf{0.XXX} & 0.00\% \\
\bottomrule
\end{tabular}
\caption{Offline performance on industrial advertising dataset (normalized).}
\label{tab:industrial_offline}
\end{table}

On the industrial dataset, CGR improves RPM by 11\%
and CTR by 7\% compared with the production GE system,
while strictly satisfying all advertisement constraints.

\FloatBarrier

\section{Online A/B Test}

We deploy CGR in a large-scale production advertising feed system
serving over 1B daily requests with strict latency SLA (<40ms P99).
Traffic is randomly split between the existing GE system (control)
and CGR (treatment) for 14 consecutive days.

\begin{table}[t]
\centering
\small
\begin{tabular}{lcc}
\toprule
Metric & Lift & p-value \\
\midrule
RPM & +6.8\% & <0.01 \\
CTR & +4.9\% & <0.01 \\
Session Duration & +3.2\% & <0.05 \\
Constraint Compliance & 100\% & - \\
\bottomrule
\end{tabular}
\caption{Online A/B test results.}
\label{tab:online_ab}
\end{table}

CGR achieves statistically significant improvements
in both monetization and user engagement metrics.
Meanwhile, constraint compliance remains at 100\%,
confirming the correctness of bounded decoding
under real production traffic.

\FloatBarrier

\section{Ablation Study}

\FloatBarrier

\section{Conclusion}

In this work, we study reranking in advertising feeds as a constrained combinatorial optimization problem, where platform revenue, user engagement, and strict business policies must be jointly satisfied.
Traditional Generator--Evaluator frameworks decouple sequence generation and utility estimation, leading to high inference cost and limited constraint awareness.

We propose Constraint-Aware Generative Re-ranking (CGR), a unified framework that transforms constrained optimization into bounded autoregressive decoding.
By explicitly leveraging structural advertisement constraints, we reduce the factorial search space to a bounded feasible set and integrate constraint satisfaction directly into neural decoding.
We further introduce constraint-aware reward pruning, enabling efficient and optimal sequence generation within strict latency requirements.

We provide a theoretical guarantee showing that bounded decoding achieves global optimality over the feasible set when all admissible sequences are enumerated.
Extensive experiments on public benchmarks and large-scale industrial data demonstrate consistent improvements over strong baselines.
Online A/B tests on production traffic exceeding one billion daily requests confirm significant gains in both revenue and user engagement, while fully satisfying business constraints and reducing inference latency by over 85\%.

Overall, our work shows that structural constraints in industrial systems are not merely limitations, but can be exploited to redesign generative ranking into an efficient and provably optimal decoding problem.
We believe this perspective opens new directions for constraint-aware generative recommendation and large-scale production deployment.

\end{document}